\newenvironment{definition}[1][Definition]{\vspace{0.3cm}\noindent\textbf{#1.} }{\vspace{0.2cm}}
\newtheorem{theorem}{Theorem}
\newenvironment{proof}[1][Proof]{\noindent\textbf{#1.} }{\ \rule{0.5em}{0.5em}}
\begin{document}

\title{Witnessing nonclassical multipartite states}

\author{A. Saguia}
\email{amen@if.uff.br}
\affiliation{Instituto de F\'{\i}sica, Universidade Federal Fluminense,
Av. Gal. Milton Tavares de Souza s/n, Gragoat\'a, 24210-346, Niter\'oi, RJ, Brazil.}

\author{C. C. Rulli}
\affiliation{Instituto de F\'{\i}sica, Universidade Federal Fluminense,
Av. Gal. Milton Tavares de Souza s/n, Gragoat\'a, 24210-346, Niter\'oi, RJ, Brazil.}

\author{Thiago R. de Oliveira}
\affiliation{Instituto de F\'{\i}sica, Universidade Federal Fluminense,
Av. Gal. Milton Tavares de Souza s/n, Gragoat\'a, 24210-346, Niter\'oi, RJ, Brazil.}

\author{M. S. \surname{Sarandy}}
\affiliation{Instituto de F\'{\i}sica, Universidade Federal Fluminense,
Av. Gal. Milton Tavares de Souza s/n, Gragoat\'a, 24210-346, Niter\'oi, RJ, Brazil.}

\date{\today }

\begin{abstract}
We investigate a witness for nonclassical multipartite states based on their 
disturbance under local measurements. The witness operator provides a 
sufficient condition for nonclassicality that coincides with a nonvanishing 
global quantum discord, but it does not demand an extremization procedure. 
Moreover, for the case of $Z_2$-symmetric systems, we rewrite the witness in 
terms of correlation functions so that classicality is found to necessarily require either vanishing 
magnetization in the invariant axis  or isotropy of the two-point function in the transverse spin plane. 
We illustrate our results in quantum spin chains, where a characterization of factorized 
ground states (with spontaneously broken $Z_2$ symmetry) is achieved. As a by-product, 
the witness will also be shown to indicate a second-order quantum phase transitions, which 
will be illustrated both for the $XY$ and Ashkin-Teller spin chains. 
\end{abstract}

\pacs{03.65.Ud, 03.67.Mn, 75.10.Jm}

\maketitle

\section{Introduction}

The characterization of quantum correlated multipartite systems has been 
a central topic in quantum information theory~(see, e.g., Refs.~\cite{Piani:08,%
Acin:10,Bennett:11} for recent discussions). Such systems are described by quantum 
states that cannot be represented by joint probability distributions. Remarkably, 
the existence of quantum correlations may occur even in the absence of entanglement  
and still be a resource for a number of quantum information applications (see, e.g., 
Refs.~\cite{Zurek:03,locking,DQC1,Brodutch:11,Fanchini:11}). Measures of 
quantum correlations provide then a valuable tool to recognize useful quantum states 
for a certain task. In this context, an important  measure is given by quantum discord 
(QD)~\cite{Ollivier:01}. It arises as a difference between two expressions for the 
total correlation in a bipartite system (as measured by the mutual information), 
which are classically equivalent but distinct in the quantum regime. QD has been 
generalized to the multipartite scenario by different approaches~\cite{Modi:10,%
Rulli:11,Chakrabarty:10,Okrasa:11}. Unfortunately, both bipartite and multipartite 
versions of QD involve extremization procedures, which makes their computation 
viable only for very small systems or for particular classes of states. 

In this context, a possible strategy to identify composite systems that do not exhibit 
purely classical correlations is to look for a witness operator, namely, an observable 
able to detect the presence of nonclassical states. A witness $\cal{W}$ is  defined here 
as a Hermitian operator whose norm $\| {\cal W} \|$ is such that, for all nonclassical states, 
$\| {\cal W} \| > 0$, with a convenient norm measure adopted. Therefore, $\| {\cal W} \| > 0$ 
is a sufficient condition for nonclassicality (or, equivalently, a necessary condition for classicality). 
Several investigations for establishing conditions to witness bipartite nonclassical 
states have recently been carried out~\cite{Bylicka:10,Ferraro:10,Dakic:10,Maziero:10}. 
Our aim here is to provide a systematic approach to obtaining a witness for nonclassicality in 
multipartite states, which will be based on their disturbance under local measurements. 
In turn, such a witness will naturally detect states with a nonvanishing multipartite QD. As an illustration, 
we will consider arbitrary $Z_2$-symmetric states, where the witness operator can be conveniently rewritten in 
terms of correlation functions so that classicality is found to 
require either vanishing magnetization in the $Z$ direction or isotropy of the two-point function in the XY spin plane. 
In particular, we will apply such results in quantum spin chains, where it will be shown that $\| {\cal W} \|$ exhibits a completely 
different behavior in the cases of $Z_2$-symmetric states and states with spontaneous symmetry breaking (SSB). In the latter 
case, a characterization of nontrivial factorized ground states (with $Z_2$ SSB) will be achieved. As a by-product, 
the witness will also be shown to exhibit a nonanalytical behavior at second-order quantum phase transitions. This nonanalyticity 
will be revealed by the first derivative of $\| {\cal W} \|$ and will be used as a tool to characterize second-order quantum 
critical points in both the XY and Ashkin-Teller spin chains. 

\section{Witness for nonclassical states}

Let us begin by considering a composite system in an $N$-partite Hilbert space 
${\cal H} = {\cal H}_{A_1} \otimes {\cal H}_{A_2} \otimes \cdots \otimes {\cal H}_{A_N}$. 
The system is characterized by quantum states described by density operators $\rho \in {\cal B}({\cal H})$, where 
${\cal B}({\cal H})$ is the set of bound, positive-semidefinite operators acting on 
${\cal H}$ with trace given by ${\textrm{Tr}}\,\rho=1$. In order to define a multipartite 
classical state, we will generalize the idea of nondisturbance under projective measurements 
introduced for bipartite states in Ref.~\cite{Luo:08}. Indeed, we will denote a set of local 
von Neumann measurements as $\{\Pi_j\} = \{\Pi_{A_1}^{i_1} \otimes \cdots \Pi_{A_N}^{i_N}\}$, 
with  $j$ denoting the index string $(i_1 \cdots i_N)$. Then, after a non-selective measurement, 
the density operator $\rho$ becomes
\begin{equation}
\Phi(\rho) = \sum_j \Pi_j \rho \Pi_j \, .
\label{vn}
\end{equation}
This operation can then be used to define a classical state. 

\begin{definition}
If there exists any measurement $\{\Pi_j\}$ such that $\Phi(\rho)=\rho$ then $\rho$ describes a {\it classical} state 
under von Neumann local measurements.
\end{definition}

Therefore, it is always possible to find out a local measurement basis such that a classical state 
$\rho$ is kept undisturbed. In this case, we will denote $\rho \in {\cal C}^N$, where ${\cal C}^N$ 
is the set of $N$-partite classical states. Observe that such a definition of classicality coincides 
with the vanishing of global QD~\cite{Rulli:11}. Note also that $\Phi(\rho)$ by itself 
is a classical state for any $\rho$, namely, $\Phi(\Phi(\rho)) = \Phi(\rho)$. Hence, $\Phi(\rho)$ can 
be interpreted as a decohered version of $\rho$ induced by measurement. A witness for nonclassical states 

can be directly obtained from the observation that the elements of the set $\{\Pi_j\}$ are eigenprojectors of $\rho$. 
This can be shown by the theorem below (see also Ref.~\cite{Luo:08}).

\begin{theorem}
\,\,$\rho \in {\cal C}^N \Longleftrightarrow \left[ \rho, \Pi_j \right] = 0 \,\,\,(\forall j)$, with 
$\Pi_j = \Pi_{A_1}^{i_1} \otimes \cdots \Pi_{A_N}^{i_N}$ and $j$ denoting the index string 
$(i_1 \cdots i_N)$.
\label{t1}
\end{theorem}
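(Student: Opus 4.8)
The plan is to reduce the statement to the algebraic identity $\Phi(\rho)=\rho$ together with the defining relations of the projector family, after which each implication follows by a one-line manipulation. First I would record the structural properties of $\{\Pi_j\}$ that are inherited from the local von Neumann measurements: since each local set $\{\Pi_{A_m}^{i_m}\}_{i_m}$ is a complete family of orthogonal projectors on ${\cal H}_{A_m}$, the tensor-product operators satisfy $\Pi_j^\dagger=\Pi_j$, the orthogonality relation $\Pi_j\Pi_k=\delta_{jk}\Pi_j$ (with $j=(i_1\cdots i_N)$, $k=(i_1'\cdots i_N')$ and $\delta_{jk}=\prod_m \delta_{i_m i_m'}$), and the completeness relation $\sum_j \Pi_j=\mathds{1}$. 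I would also restate that, by the Definition above together with Eq.~(\ref{vn}), $\rho\in{\cal C}^N$ means precisely that there exists one such product family for which $\Phi(\rho)=\sum_j \Pi_j\rho\Pi_j=\rho$; the quantifier over $j$ in the theorem is understood inside this fixed family, while the existential quantifier over the measurement is the same one that appears in the Definition.

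For the implication $(\Leftarrow)$ I would assume $[\rho,\Pi_j]=0$ for every $j$ in some product family. Then $\Pi_j\rho\Pi_j=\rho\Pi_j\Pi_j=\rho\Pi_j$, and summing over $j$ and using completeness gives $\Phi(\rho)=\sum_j\rho\Pi_j=\rho\sum_j\Pi_j=\rho$. Hence this family leaves $\rho$ undisturbed and $\rho\in{\cal C}^N$.

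For the converse $(\Rightarrow)$ I would assume $\rho\in{\cal C}^N$, so there is a product family with $\rho=\sum_j\Pi_j\rho\Pi_j$. Multiplying on the left by a fixed $\Pi_k$ and invoking orthogonality collapses the sum to a single term, $\Pi_k\rho=\sum_j \Pi_k\Pi_j\rho\Pi_j=\Pi_k\rho\Pi_k$; multiplying instead on the right gives $\rho\Pi_k=\Pi_k\rho\Pi_k$. Comparing the two yields $\Pi_k\rho=\rho\Pi_k$, i.e.\ $[\rho,\Pi_k]=0$ for every $k$. This also makes transparent the remark preceding the theorem: commuting with the complete orthogonal family $\{\Pi_j\}$ means $\rho$ is block-diagonal in the $\Pi_j$ decomposition, so the $\Pi_j$ are (sums of) eigenprojectors of $\rho$.

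Since the algebra is elementary, I do not expect a genuine obstacle; the one step that must be handled with care is the bookkeeping of the quantifiers — confirming that the single product measurement appearing in the Definition is exactly the one whose projectors commute with $\rho$, rather than an arbitrary or universally quantified family — together with verifying that tensor products of local complete orthogonal projectors indeed inherit both the orthogonality and completeness relations used above. Everything else is a direct consequence of $\Pi_j\Pi_k=\delta_{jk}\Pi_j$ and $\sum_j\Pi_j=\mathds{1}$.
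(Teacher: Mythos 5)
Your proof is correct, but both implications are argued differently from the paper. For the direction $\rho\in{\cal C}^N\Rightarrow[\rho,\Pi_j]=0$, the paper follows Luo's trick: it evaluates the positive-semidefinite sum $\sum_j[\rho,\Pi_j][\rho,\Pi_j]^\dagger=\Phi(\rho^2)-\rho^2$, notes that $\Phi(\rho)=\rho$ forces $\Phi(\rho^2)=\rho^2$, and concludes each commutator vanishes because a sum of positive operators is zero only if every term is. You instead sandwich $\rho=\sum_j\Pi_j\rho\Pi_j$ between $\Pi_k$ on the left and on the right and use orthogonality to get $\Pi_k\rho=\Pi_k\rho\Pi_k=\rho\Pi_k$ directly — an argument that is more elementary and avoids positivity altogether. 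For the converse, the paper appeals to the spectral decomposition $\rho=\sum_j p_j\Pi_j$, which implicitly treats the $\Pi_j$ as rank-one eigenprojectors, whereas your computation $\Phi(\rho)=\sum_j\rho\Pi_j=\rho$ using only $\Pi_j\Pi_k=\delta_{jk}\Pi_j$ and $\sum_j\Pi_j=\mathds{1}$ works for projectors of arbitrary rank. What the paper's route buys is the explicit identity $\sum_j[\rho,\Pi_j][\rho,\Pi_j]^\dagger=\Phi(\rho^2)-\rho^2$, which quantifies the disturbance and connects to the measurement-induced-disturbance literature; what your route buys is a shorter, fully self-contained algebraic proof with slightly weaker hypotheses. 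Your attention to the quantifier bookkeeping (the existential over the measurement family versus the universal over $j$ within that family) is also a point the paper leaves implicit.
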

\begin{proof}
If $\rho \in {\cal C}^N$ then $\Phi(\rho)=\rho$. Then, similarly as in Ref.~\cite{Luo:08}, a direct 
evaluation of $\sum_j \left[\rho,\Pi_j\right]\,\left[\rho,\Pi_j\right]^{\dagger}$ yields
$\sum_j \left[\rho,\Pi_j\right]\,\left[\rho,\Pi_j\right]^{\dagger} = \Phi(\rho^2)-\rho^2$. 
However, $\Phi(\rho)=\rho$ also implies that $\Phi(\rho^2)=\rho^2$. Therefore, 
\begin{equation}
\sum_j \left[\rho,\Pi_j\right]\,\left[\rho,\Pi_j\right]^{\dagger} = 0\,.
\end{equation}
Hence, $\rho \in {\cal C}^N \Longrightarrow \left[\rho,\Pi_j\right]=0$. 
On the other hand, if $\left[\rho,\Pi_j\right]=0$ then $\{\Pi_j\}$ provides a 
basis of eigenprojectors of $\rho$. Then, from the spectral decomposition, we obtain
\begin{equation} 
\rho = \sum_j p_j \Pi_j = \sum_{i_1,\cdots,i_N} p_{i_1,\cdots,i_N} \Pi_{A_1}^{i_1} \otimes \cdots \Pi_{A_N}^{i_N}\, ,
\label{rho-sd}
\end{equation}
which immediately  implies that $\Phi(\rho)=\rho$. Hence $\left[\rho,\Pi_j\right]=0 \Longrightarrow \rho \in {\cal C}^N$.
\end{proof}

We can now propose a necessary condition to be obeyed for arbitrary multipartite classical states.

\begin{theorem}
Let $\rho$ be a classical state and $\rho_{A_i}$ the reduced density operator for the subsystem $A_i$. 
Then $\left[\rho,\rho_{A_1} \otimes \cdots \otimes \rho_{A_N}\right] = 0$.
\end{theorem}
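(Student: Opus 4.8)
The plan is to use Theorem~\ref{t1}, which characterizes classicality by the existence of a local product projective measurement $\{\Pi_j\}$ commuting with $\rho$, and to show that the reduced density operators $\rho_{A_i}$ are diagonal in exactly the same local bases that define these projectors.
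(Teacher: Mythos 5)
Your proposal is correct and follows essentially the same route as the paper: the authors invoke Theorem~\ref{t1} to write $\rho = \sum_j p_j \,\Pi_{A_1}^{i_1}\otimes\cdots\otimes\Pi_{A_N}^{i_N}$, take partial traces to get $\rho_{A_i}=\sum_n p_n \Pi_{A_i}^{n}$ (i.e., each marginal is diagonal in the same local basis), and conclude that $\rho_{A_1}\otimes\cdots\otimes\rho_{A_N}$ is a linear combination of the $\Pi_j$, each of which commutes with $\rho$. To turn your plan into a full proof you need only carry out that partial-trace computation and state the final commutator evaluation explicitly.
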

\begin{proof}
From theorem~\ref{t1}, if $\rho \in {\cal C}^N$ then the spectral decomposition of $\rho$ yields Eq.~(\ref{rho-sd}). 
Therefore, 
\begin{equation}
\rho_{A_i} = \sum_{n} p_n \Pi_{A_i}^{n} \,.
\end{equation}
Hence, by direct evaluation, we obtain that $\left[\rho,\rho_{A_1} \otimes \cdots \otimes \rho_{A_N}\right] = 0$.
\end{proof}

Observe that, given a composite multipartite state $\rho$, it is rather simple to evaluate the commutator 
$\left[\rho,\rho_{A_1} \otimes \cdots \otimes \rho_{A_N}\right]$, with no extremization procedure as usually 
required by QD computation. From the necessary condition for classical states above, we can define 
a witness for nonclassicality by the norm of the operator $\left[\rho,\rho_{A_1} \cdots \rho_{A_N}\right]$. 
Indeed, if $\rho \in {\cal C}^N \Longrightarrow \| {\cal W} \| = 0$, where 
\begin{equation}
{\cal W} = \left[\rho,\rho_{A_1} \otimes \cdots \otimes \rho_{A_N}\right] \, .
\end{equation}
Therefore, $\| {\cal W} \| > 0$ is sufficient for nonclassicality. 

For concreteness, we will take $\| {\cal W} \|$  as defined by the trace norm, namely, 
\begin{equation}
\| {\cal W} \| = {\textrm{Tr}} \sqrt{ {\cal W} {\cal W}^\dagger} \, .
\end{equation}
The witness $\| {\cal W} \|$ provides a sufficient -- but not necessary -- condition for nonclassicality. 
As a simple example, consider a two-qubit system $AB$ prepared in 
a Bell state $|\psi\rangle = (|00\rangle + |11\rangle)/\sqrt{2}$, where $\{|0\rangle,|1\rangle\}$ denotes the computational 
basis for both qubits $A$ and $B$. The reduced density operators for each qubit are given by maximally mixed states, 
with $\rho_A \otimes \rho_B$ proportional to the identity operator. Therefore, by taking $\rho = |\psi\rangle\langle\psi|$, we have that 
$\left[\rho,\rho_{A} \otimes \rho_{B}\right] = 0$, even though $\rho$ describes a maximally entangled state. However, as we will show below, 
such a witness can be a very useful tool to investigate nonclassicality in other (non-maximally entangled) classes of quantum correlated states 
in many-body systems.

\section{Nonclassical states in  bipartite systems with  $Z_2$ symmetry} 
As an application of the nonclassicality witness, let us translate the witness operator 
$\| {\cal W} \|$ in terms of correlation functions and magnetization in quantum spin systems.  
To this aim, we will consider an interacting pair of spins-1/2 particles in a chain, 
which is governed by an arbitrary Hamiltonian $H$ that is both real and exhibits $Z_2$ symmetry, 
i.e. invariance under $\pi$-rotation around a given spin axis. 
By taking this spin axis as the $z$ direction, this implies the commutation of $H$ with the  parity 
operator $\bigotimes_{i=1}^{N} \sigma^3_i$, where $N$ denotes the total number of spins and 
$\sigma^3_i$ is the Pauli operator along the $z$-axis at site $i$. From this symmetry, the two-spin reduced 
density matrix at sites labelled by $i$ and $j$ in the basis 
$\{ |\uparrow\uparrow\rangle, |\uparrow\downarrow\rangle, |\downarrow\uparrow\rangle, |\downarrow\downarrow\rangle\}$, 
with $|\uparrow\rangle$ and $|\downarrow\rangle$ denoting the eigenstates of  $\sigma^3$, will be given by an $X$-state, 
reading
\begin{equation}
\mathcal{\rho}_{AB}=\left( 
\begin{array}{cccc}
a & 0 & 0 & f \\ 
0 & b_1 & z & 0 \\ 
0 & z & b_2 & 0 \\ 
f & 0 & 0 & d
\end{array}
\right) .  \label{rhoAB}
\end{equation}
In terms of spin correlation functions, these elements can be written as 
\begin{eqnarray}
a &=& \frac{1}{4} \left(1+G^i_{z}+G^{j}_z+G^{ij}_{zz}\right) \, , \hspace{0.3cm} 
z = \frac{1}{4} \left(G^{ij}_{xx}+G^{ij}_{yy} \right) \, , \nonumber \\
b_1 &=& \frac{1}{4} \left(1+G^i_{z}-G^{j}_z-G^{ij}_{zz}\right) \, , \hspace{0.3cm}
f = \frac{1}{4} \left(G^{ij}_{xx}-G^{ij}_{yy} \right) \nonumber \, , \\
b_2 &=& \frac{1}{4} \left(1-G^i_{z}+G^{j}_z-G^{ij}_{zz}\right) \, , \nonumber \\
d &=& \frac{1}{4} \left(1-G^i_{z}-G^{j}_z+G^{ij}_{zz}\right) \, , 
\label{relem}
\end{eqnarray}
where $G^k_{z} = \langle \sigma_z^k \rangle$ $(k=i,j)$ is the magnetization density at site $k$ and 
$G^{ij}_{\alpha\beta}=\langle \sigma^i_\alpha \sigma^j_\beta \rangle$ 
($\alpha,\beta=x,y,z$) denote two-point spin-spin functions at sites $i$ and $j$, with the expectation 
value taken over the quantum state of the system. Note that, in case of translation invariance, 
which will be assumed here for simplicity, $G^k_{z}=G^{k^\prime}_{z} \equiv G_{z} $ ($\forall\, k,k^\prime$) and, 
therefore, $b_1=b_2 \equiv b$.  In order to evaluate the witness in terms of the magnetization density 
and spin-spin correlation functions, we compute the commutator ${\cal W} = \left[\rho,\rho_{A} \otimes \rho_{B}\right]$, 
yielding
\begin{equation}
{\cal W} = \left( 
\begin{array}{cccc}
0 & 0 & 0 & k \\ 
0 & 0 & 0 & 0 \\ 
0 & 0 & 0 & 0 \\ 
-k & 0 & 0 & 0
\end{array}
\right) ,  \label{commut-X}
\end{equation}
with $k = f \left[ \left(b+d\right)^2 - \left(a+b)^2\right)\right]$. Therefore
\begin{equation}
\| {\cal W} \| = \frac{1}{2} \, | \langle \sigma^i_x \sigma^j_x \rangle - \langle \sigma^i_y \sigma^j_y \rangle | \,\,
| \langle \sigma_z \rangle |.
\end{equation}
Therefore, for translation invariant systems with $Z_2$ symmetry, the necessary conditions for a state be classical are 
either vanishing magnetization in the $z$ spin direction or isotropy in the $XY$ spin plane, i.e., 
\begin{equation}
\langle \sigma_z \rangle = 0 \,\,\,\, {\textrm{or}} \,\,\,\, 
\langle \sigma^i_x \sigma^j_x \rangle = \langle \sigma^i_y \sigma^j_y \rangle.
\label{witness-cf}
\end{equation}
Hence, the violation of the condition established in Eq.~(\ref{witness-cf}) is sufficient 
for the nonclassicality of the quantum state.

\section{Factorized ground states in the XY model} 

As an example of a $Z_2$-symmetric system, let us consider a XY spin-1/2 chain, which is 
governed by the following Hamiltonian
\begin{equation}
H=-\sum_{i=0}^{N-1}\left\{  \frac{J}{2}\left[  (1+\gamma)\sigma_{i}%
^{x}\sigma_{i+1}^{x}+ (1-\gamma)\sigma_{i}^{y}\sigma_{i+1}^{y}\right]
+h\sigma_{i}^{z}\right\}  , \label{XY}%
\end{equation}
with $N$ being the number of spins in the chain, $\sigma_{i}^{m}$ the $i$-th spin
Pauli operator in the direction $m=x,y,z$ and periodic boundary conditions 
assumed. For the (dimensionless) anisotropy parameter $\gamma\rightarrow0$, the model reduces to the XX model 
whereas for all the interval $0<\gamma\le1$ it belongs to the Ising universality class, 
reducing to the transverse field Ising model at $\gamma=1$.  
The parameter $h$ is associated with the
external transverse magnetic field. In the thermodynamical limit, for $\lambda\equiv J/h = 1$, a
second-order quantum critical line takes place for any $0<\gamma\le1$.

The exact solution of the XY model is well known~\cite{XYsol-1,XYsol-2}. The 
Hamiltonian~(\ref{XY}) can be diagonalized via a Jordan-Wigner map followed by a
Bogoliubov and Fourier transformation. Given the $Z_2$ symmetry, the two-spin reduced density operator can
be written as
\begin{equation}
\rho = \frac{1}{4} \left[ I\otimes I + \sum_{i=1}^{3} \left( c_i \sigma^i \otimes \sigma^i \right) +
c_4 \left(I \otimes \sigma^3 + \sigma^3 \otimes I\right) \right], 
\label{RDO}
\end{equation}
where, in terms of the parametrization given in Eq.~(\ref{rhoAB}), we have 
$c_1 = 2z+2f$, $c_2 = 2z-2f$, $c_3 = a+d-2b$, and $c_4 = a-d$ (with $b_1=b_2\equiv b$).
Due to the fact that the system is translationally invariant, the reduced state~(\ref{RDO}) 
depends only on the distance between the spins. We will consider here nearest-neighbor pairs. 
The magnetization density $G_{z}=\langle \sigma_z \rangle$ and the two-point functions 
$G^{i,i+1}_{\alpha\beta}=\langle \sigma^i_\alpha \sigma^{i+1}_\beta \rangle$ can be directly 
obtained from the exact solution of the model~\cite{XYsol-1,XYsol-2}.

In the ordered ferromagnetic phase ($\lambda>1$), a finite magnetization $\langle \sigma^x_i \rangle$ (in the coupling direction) 
emerges, breaking the $Z_2$ symmetry of the Hamiltonian for an infinite chain. 
This is possible because, at the critical point, the ground state 
becomes two-fold degenerated, exhibiting opposites values of $\langle \sigma^x_i \rangle$  
(denoted as $|+m\rangle$ and $|-m\rangle$). Naturally, a superposition of the two 
ground states is also a ground state, which could then preserve the $Z_2$ symmetry. However,
in the thermodynamic limit, these two degenerated ground states are not connected by local unitaries.
Thus, to go from one to the other, one needs to flip all the spins at the same time, which is both a  
highly non-local and energy costly operation. Therefore,  a small external perturbation will 
always pick up one of the two states, and the system will be "frozen" at it. The use of a superposition  
that preserves the symmetry could be justified in the case of a finite system, where the system, if 
highly isolated, could exist in such state. Another possibility would be to consider an equal mixture of the
two ground states, which could be reasonable in the case one prepares a system at some temperature
$T$ and cools it down to the limit of $T\rightarrow 0$. This is a $Z_2$-symmetric state called 
{\it thermal ground state}. Note that, since the thermodynamic properties depend only on the
energy spectrum (and not on the eigenstates), all choices of ground states will be equivalent. 
However, classical and quantum correlations {\it do} depend on the eigenstates and, therefore, 
the symmetric and broken choices of ground states may exhibit different pattern of 
correlations~\cite{Oliveira:08,Giampaolo:08,Ciliberti:10,Tomasello:10}.

Remarkably, the ground state with SSB exhibits a factorization (product state) point at 
\begin{equation}
\gamma^2 + \left(\frac{1}{\lambda}\right)^2 = 1.
\label{factorization}
\end{equation}
This point identifies a change in the behavior of the correlation functions decay, which pass from
monotonically to an oscillatory decay~\cite{XYsol-2}. Besides, it has been realized
that, at this point, the product ground state is two-fold degenerated even for finite systems~\cite{Giampaolo:08,Ciliberti:10,Tomasello:10}.

In order to apply the witness $\| {\cal W} \|$ in the XY model, we will first consider the case of the thermal ($Z_2$-symmetric) 
ground state, for which the reduced density matrix is given by Eq.~(\ref{RDO}). For this case, a plot of $\| {\cal W} \|$ as a 
function of $\lambda$ is provided in Fig.~\ref{f1} for $\gamma=0.6$. Observe that this plot shows that the only 
possible classical state appears for $\lambda=0$. Indeed, this point corresponds to an infinite transverse magnetic field applied, 
which leads the system to a product state, with all spins individually pointing in the $z$ direction. Concerning the 
factorization point given in Eq.~(\ref{factorization}), it is absent for the thermal state, since the statistical mixture of 
{\it non-orthogonal} product states exhibits nonvanishing quantum correlations~\cite{Ciliberti:10}.
\begin{figure}[ht]
\centering {\includegraphics[clip,scale=0.3]{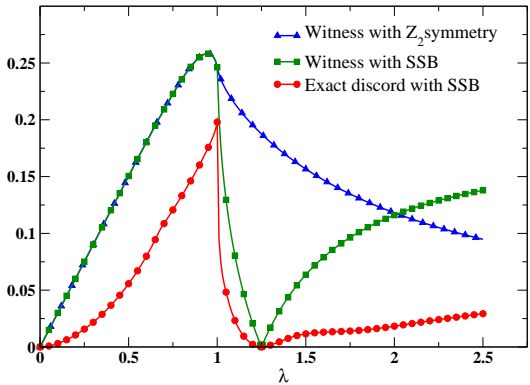}}
\caption{(Color online) Witness $\| {\cal W} \|$ and exact value of QD for a pair of spins 
in both symmetric and broken ground states of the XY spin-1/2 infinite chain as a function of $\lambda$ for $\gamma=0.6$.}
\label{f1}
\end{figure}

On the other hand, if SSB is taken into account, which must be the case in the thermodynamic limit, this 
picture dramatically changes. Indeed, in the broken case, the two-spin reduced density matrix is more
complicated and reads
\begin{equation}
\mathcal{\rho}_{AB}=\left( 
\begin{array}{cccc}
a & p & p & f \\ 
p & b & z & q \\ 
p & z & b & q \\ 
f & q & q & d
\end{array}
\right) ,  \label{rhoAB-SSB}
\end{equation}
where the functions $p(\lambda,\gamma)$ and $q(\lambda,\gamma)$ are given by
\begin{eqnarray}
p(\lambda,\gamma) &=& \langle \sigma^i_x \rangle + \langle \sigma^i_x \sigma^{i+1}_z \rangle \, ,\\
q(\lambda,\gamma) &=& \langle \sigma^i_x \rangle - \langle \sigma^i_x \sigma^{i+1}_z \rangle \, .
\end{eqnarray}

The evaluation of these functions have been detailed discussed in Ref.~\cite{Oliveira:08}, where  
tight lower and upper bounds for $p(\lambda,\gamma)$ and $q(\lambda,\gamma)$ are obtained. In our plots, 
either of such bounds essentially yields the same curves, which led us to keep the results produced with 
the lower bound. Concerning the results for $\| {\cal W} \|$, we can compare the plots for the thermal ground  
state and the ground state with SSB also in Fig.~\ref{f1}. Observe that, in the case with SSB, the only point for 
which $\| {\cal W} \|=0$ is $\lambda = 1.25$. Indeed, this corresponds to a classical state, as can 
be confirmed by the exact computation of the symmetric QD. Remarkably, this classical state is associated 
with one of two doubly-degenerated factorized (fully product) ground state, which appears as a consequence of 
the $Z_2$ SSB. This characterization of factorized ground states through $\| {\cal W} \|=0$ can be numerically confirmed in 
Fig.~\ref{f2}, where $\| {\cal W} \|$ is plotted for a number of different values of the parameter $\gamma$. 
These unique classical points for the XY model are then clearly revealed by the witness evaluation. 
Another aspect of the witness observed from Figs.~\ref{f1} and \ref{f2} is that it exhibits nonanalyticity in its first derivative 
at the second-order quantum critical point $\lambda=1$. This phenomenon occurs both for the symmetric and broken ground states, 
which promotes $\| {\cal W} \|$  to a simple useful tool also to detect quantum phase transitions.

\begin{figure}[ht]
\centering {\includegraphics[clip,scale=0.3]{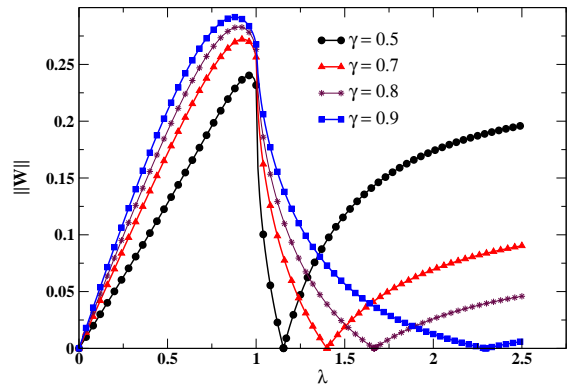}}
\caption{(Color online) Witness $\| {\cal W} \|$ for a pair of spins in the ground state with SSB of the XY 
spin-1/2 infinite chain as a function of $\lambda$ for several values of the parameter $\gamma$. }
\label{f2}
\end{figure}

\section{Quantum correlated states in the Ashkin-Teller model} 
The Ashkin-Teller model has been introduced as a generalization of the Ising spin-1/2 
model to investigate the statitiscs of lattices with four-state 
interacting sites~\cite{Ashkin:43}. It exhibits a rich phase diagram~\cite{ATpd} 
and has recently attracted a great deal of attention  due to several applications, 
e.g., nonabelian anyon models~\cite{Gils:09} and orbital current loops in CuO2-plaquettes 
of high-Tc cuprates~\cite{Gronsleth:09}. The Hamiltonian for the quantum Ashkin-Teller model in 
one-dimension for a chain with $M$ sites is given by 
\begin{eqnarray}
&&H_{AT} =-J\sum_{j=1}^{M}\left( \sigma_{j}^{x}+\tau_{j}^{x} + \Delta
\sigma_{j}^{x} \tau_{j}^{x}\right)  \nonumber \\
&&\hspace{-0.5cm}-J\,\beta \sum_{j=1}^{M}\left( \sigma_{j}^{z}\sigma_{j+1}^{z}
+ \tau_{j}^{z}\tau_{j+1}^{z} + \Delta \sigma_{j}^{z}\sigma_{j+1}^{z}
\tau_{j}^{z}\tau_{j+1}^{z}\right),  \label{at}
\end{eqnarray}
where $\sigma_j^\alpha$ and $\tau_j^\alpha$ $(\alpha = x,y,z)$ are
independent Pauli spin-1/2 operators, 
$J$ is the exchange coupling constant, 
$\Delta$ and $\beta$ are (dimensionless) parameters, 
and periodic boundary conditions (PBC) are adopted, i.e., $%
\sigma^\alpha_{M+1} = \sigma^\alpha_{1}$ and $\tau^\alpha_{M+1} =
\tau^\alpha_{1}$ ($\alpha = x,y,z$). The Ashkin-Teller model is $Z_2 \otimes
Z_2$-symmetric, with the Hamiltonian commuting with the parity operators 
\begin{equation}
\mathcal{P}_1 = \otimes_{j=1}^{M} \sigma_j^x \hspace{1cm} {\text{and}} \hspace{%
1cm} \mathcal{P}_2 = \otimes_{j=1}^{M} \tau_j^x .  \label{parity-at}
\end{equation}
Therefore, the eigenspace of $H_{AT}$ can be decomposed into four disjoint
sectors labelled by the eigenvalues of $\mathcal{P}_1$ and $\mathcal{P}_2$. 
We then numerically diagonalize $H_{AT}$ (via power method) to evaluate the witness $\| {\cal W} \|$, which provides 
a sufficient condition for the absence of $Z_2 \otimes Z_2$-symmetric classical states for finite values of $\beta$ and $\Delta$. 
We illustrate this result by considering different configurations of multipartite states. More specifically, we choose a spin quartet, 
namely, a block of $L=4$ particles described by spin operators $\bigotimes_{k=j}^{j+1} \hat{\sigma}_k \hat{\tau}_k$, and 
respective generalization for a spin octet. For such cases, the absence of classical states for any finite $\Delta$ is displayed 
in Fig.~\ref{f3} for $\beta=1$ in a chain with $N=16$ particles. This is confirmed in the inset of Fig.~\ref{f3}, where the exact 
behavior of global QD is exhibited.
\vspace{-0.2cm}
\begin{figure}[ht]
\centering {\includegraphics[scale=0.32]{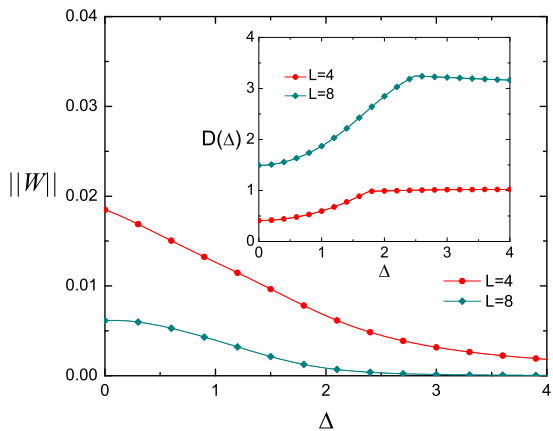}}
\caption{(Color online) Witness $\| {\cal W} \|$ as a function of $\Delta$ for several $L$-spin 
multipartite states in the Ashkin-Teller model for $\beta=1$ in a chain with $N=16$ spin particles.}
\label{f3}
\end{figure}

Moreover, besides sufficent for identifying no classical states in the thermal ($Z_2 \otimes Z_2$-symmetric) ground state 
of the Ashkin-Teller chain, the witness $\| {\cal W} \|$ can also be applied to identify second-order 
quantum phase transitions driven by the parameter $\beta$. We illustrate this result by plotting $\| {\cal W} \|$ for a quartet
in Fig.~\ref{f4} as a function of $\beta$ for $\Delta=3$. From this plot, a second-order critical point can be readout 
through a minimum in the derivative of $\| {\cal W} \|$ (see inset in Fig.~\ref{f4}), which gets pronounced as the length 
of the chain is increased (we observe that the maximum appearing in the plot for $d \| {\cal W} \| /d\beta$ does {\it not} 
indicate any transition, since it does not exhibit pronunciation as we increase the size of the chain). This pronounced minimum, 
which is a typical precursor of the nonanalyticity at the thermodynamic limit, occurs for $\beta=0.61$, which is compatible 
with the quantum phase diagram in Ref.~\cite{ATpd}. A further transition would be expected to occur near $\beta=1.5$. 
This critical point is not apparent from $\| {\cal W} \|$, at least for chains up to $N=20$ spins.
Indeed, this second critical point is usually harder to characterize (see, e.g., characterization 
with entanglement in Ref.~\cite{Rulli:10} and with global QD in the upper insets of Fig.~\ref{f4}, 
where pronunciation is subtle). 
\begin{figure}[ht]
\centering {\includegraphics[scale=0.32]{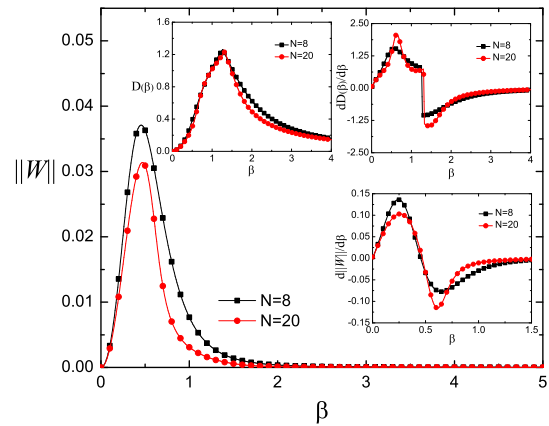}}
\caption{(Color online) Witness $\| {\cal W} \|$ as a function of $\beta$ for a quartet 
in the Ashkin-Teller model for $\Delta=3$ in chains with $N=8$ and $N=20$ spin particles.}
\label{f4}
\end{figure}
\section{Conclusion} 

In summary, we have proposed a multipartite witness for nonclassical states  
based on their disturbance under local measurements. This witness provides a 
sufficient condition for nonclassicality that is in agreement with a nonvanishing 
global QD. As an illustration, we derived the necessary conditions for classicality 
in terms of correlation functions for $Z_2$-symmetric density operators and apply 
the witness for both the XY and Ashkin-Teller chains. Remarkably, the witness detects the 
factorized ground state of the XY chain as its unique classical state (for finite 
value of $\lambda$ and $\gamma$) as the SSB was taken into account. Then, in this example, 
$\| {\cal W} \|$ turns out to be not only sufficient but also necessary for nonclassicality. 
Moreover, the witness was also shown to be able to identify second-order quantum 
phase transitions through a nonanalyticity in the first derivative of $\| {\cal W} \|$ for both XY and 
Ashkin-Teller chains. Even though classicality is rare in many-body microscopic systems~\cite{Ferraro:10}, 
it is rather useful, as a resource to both quantum information and condensed matter applications, to characterize 
the existence of nonclassical correlation through simple and direct procedures. In particular, the typicality 
of states with vanishing genuine $n$-partite correlations would be a possible application of our results. 
The witness $\| {\cal W} \|$ may constitute a helpful tool for such kind of characterization.  
 

\subsection*{Acknowledgments}

We  acknowledge financial support from the Brazilian agencies CNPq and FAPERJ. 
This work was performed as part of the Brazilian National Institute for Science and 
Technology of Quantum Information (INCT-IQ).

\end{document}